\documentclass[10pt, doublecolumn]{IEEEtran}
\usepackage{epsfig,latexsym}
\usepackage{float}
\usepackage{indentfirst}
\usepackage{amsmath}
\usepackage{amssymb}
\usepackage{times}
\usepackage{subfigure}
\usepackage{psfrag}
\usepackage{hyperref}
\usepackage{cite}
\usepackage{lastpage}
\usepackage{fancyhdr}
\usepackage{color}
 \usepackage{amsthm}
\usepackage{bigints}
\sloppy

\newtheorem{Lemma}{Lemma}
\newtheorem{lemma}[Lemma]{$\mathbf{Lemma}$}

\begin{document}
\title{ {\huge  A Simple Design of  IRS-NOMA Transmission      }}

\author{ Zhiguo Ding     and H. Vincent Poor, \thanks{ 
  
\vspace{-1em}
    Z. Ding and H. V. Poor are  with the Department of
Electrical Engineering, Princeton University, Princeton, NJ 08544,
USA. Z. Ding
 is also  with the School of
Electrical and Electronic Engineering, the University of Manchester, Manchester, UK (email: \href{mailto:zhiguo.ding@manchester.ac.uk}{zhiguo.ding@manchester.ac.uk}, \href{mailto:poor@princeton.edu}{poor@princeton.edu}).

  }\vspace{-2em}}
 \maketitle

\begin{abstract} 
  This letter proposes a simple design of  intelligent reflecting surface (IRS) assisted  non-orthogonal multiple access (NOMA)  transmission, which can ensure that more users are served on each orthogonal   spatial direction than spatial division multiple access (SDMA). In particular, by employing IRS, the directions of   users'   channel vectors can be effectively aligned, which facilitates the implementation of NOMA. Both analytical and simulation results are provided to demonstrate the performance of the proposed  IRS-NOMA scheme and also study the impact of hardware impairments on  IRS-NOMA. 
\end{abstract}
  
\vspace{-1.6em}
\section{Introduction}
Non-orthogonal multiple access (NOMA) has been recognized as a promising multiple access candidate   for future mobile networks \cite{nomama}. The key idea of NOMA is to serve multiple users on each orthogonal bandwidth resource block.   In   scenarios with multiple-antenna nodes, orthogonal spatial directions can be viewed as a type of resource blocks. Conventional orthogonal multiple access (OMA), such as spatial division multiple access (SDMA), is to serve a single user on each spatial direction, whereas the use of NOMA is to ensure that multiple users are served simultaneously  on each spatial direction and hence improves spectral efficiency. However,  it is important to point out that the use of NOMA is not always preferable    \cite{7555306}. For example,  if users' channel vectors are orthogonal to each other,   SDMA is more preferable than NOMA, whereas the situation, in which the directions of the users' channel vectors are the same,  is the ideal case for the implementation of NOMA.  

Therefore, an important question to broaden  the applications of NOMA is whether the directions of users' channel vectors can be manipulated, i.e., aligning one user's channel with the others'. This is difficult in  conventional wireless systems, since  the   users' channels  are fixed and determined by propagation environments.  Motivated by this difficulty, this letter is to propose a new type of NOMA transmission by employing   the intelligent reflecting surface (IRS) which can be viewed as  a low-cost antenna array consisting of a large number of
reconfigurable reflecting  elements \cite{irs1, irs2}. By applying IRS, the direction of a user's channel vector can be effectively tuned,  which facilitates the implementation of NOMA. In particular, the spectral efficiency and connectivity can be  improved by IRS-NOMA since a single spatial direction can be used to serve multiple users, even if their original channels are not aligned.  Due to the low-cost feature of IRS, e.g., finite-resolution  phase shifters,   a user's channel vector cannot be accurately aligned to a target direction. The impact of this hardware impairment on  IRS-NOMA is investigated and the performance of the developed practical IRS-NOMA transmission scheme is characterized in this letter.

\begin{figure}[t]\vspace{-0.5em}
 \centering
         \includegraphics[width=.24\textwidth]{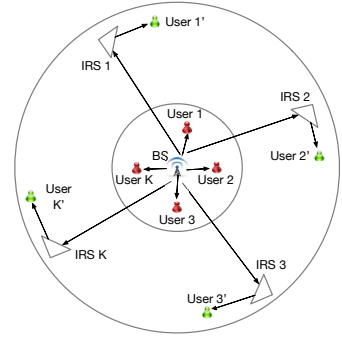} \vspace{-0.5em}
\caption{ A system diagram for IRS-NOMA with $2K$  users and $K$ IRS's.    }\label{fig1x} \vspace{-2em}
\end{figure}

\vspace{-0.5em}
\section{System Model}
Consider a multi-user downlink   scenario as shown in Fig.~\ref{fig1x}. There are two types of users, namely near users and cell-edge users, where it is assumed that there is no direct link between the base station and the cell-edge users.   SDMA is used as a benchmark for IRS-NOMA, where $K$ near users  are served by the base station which is equipped with $M$ ($M\geq K$) antennas and performs zero forcing beamforming.     To better illustrate the benefit of   IRS-NOMA, we further assume that the $K$ near users are scheduled because their channel vectors  are orthogonal to each other, which means that the beamforming vectors, denoted by $\mathbf{w}_k$, $1\leq k \leq K$, are orthonormal vectors.

The main benefit for employing  IRS-NOMA     is to ensure that more users are connected than conventional SDMA. For illustration purposes, we assume that  on each beam, $\mathbf{w}_k$,  one additional user, denoted by user  $k'$, is served with the help of an IRS which is equipped with $N$ antennas, as shown in Fig.~\ref{fig1x}. In addition, we assume that  only user $k'$ can hear   IRS $k$ since the IRS's are deployed close to the cell-edge  users\footnote{These assumptions facilitate the illustration for  the benefit of IRS-NOMA. How to extend the obtained analytical results to more general scenarios is an important topic for future research but beyond the scope of this letter. }

The base station broadcasts  $\sum^{K}_{k=1} \mathbf{w}_k(\alpha_1s_k+\alpha_2s_{k'})$, where $s_{k'}$ denotes the signal to be sent to user $k'$, $s_k$ is the signal to be sent to user $k$, $\alpha_i$ denotes the power allocation coefficient, and $\alpha_1^2+\alpha_2^2=1$.  Because it is assumed that there is no direct link between the IRS's and the near users, the performance analysis for user $k$ is exactly the same as those in conventional NOMA systems, and hence in this letter we   focus on the performance at user $k'$ only.

Therefore, the   signal received by user $k'$ is given by
\begin{align}
y_{k'} = \mathbf{h}^H_{k'} \boldsymbol \Theta_k \mathbf{G}_k \sum^{K}_{k=1}\mathbf{w}_k (\alpha_1s_k+\alpha_2s_{k'})+w_{k'},
\end{align}
where $\mathbf{G}_k$ denotes the $N\times M$ complex Gaussian channel matrix from the base station to the IRS associated with user $k'$,   $\mathbf{h}_{k'}$ denotes the  complex Gaussian    channel vector from   the  IRS to user $k'$, and $w_{k'}$ denotes the noise.   $\boldsymbol \Theta_k $ is a diagonal matrix, and    each of its main diagonal elements is denoted by $\beta_{k,i}e^{-j\theta_{k,i}}$, where $\theta_{k,i}$ denotes the reflection phase shift and $\beta_{k,i}$ denotes  the amplitude reflection coefficient  \cite{irs1, irs2}.

As in conventional NOMA, it is assumed that $\alpha_1\leq \alpha_2$, and hence the signal-interference-plus-noise (SINR) for user $k'$ to decode its message  is given by 
\begin{align}\nonumber
\text{SINR}_{k'} =& \frac{ |\mathbf{h}^H_{k'} \boldsymbol \Theta_k \mathbf{G}_k \mathbf{w}_k|^2 \alpha_1^2}{ |\mathbf{h}^H_{k'} \boldsymbol \Theta_k \mathbf{G}_k \mathbf{w}_k|^2 \alpha_2^2+ \sum^{K}_{i=1, i\neq k}|\mathbf{h}^H_{k'} \boldsymbol \Theta_k \mathbf{G}_k\mathbf{w}_i|^2  +\frac{1}{\rho}}\\ \label{SINR}
=&
 \frac{ | \boldsymbol \theta_k^H \mathbf{D}_{k'} \mathbf{h}_k|^2 \alpha_1^2}{| \boldsymbol \theta^H _k\mathbf{D}_{k'} \mathbf{h}_k|^2  \alpha_2^2+ \sum^{K}_{i=1,i\neq k}|\boldsymbol \theta_k^H \mathbf{D}_{k'} \mathbf{h}_i|^2  +\frac{1}{\rho}},\hspace{-1em}
\end{align}
where $\rho$ denotes the transmit signal-to-noise ratio (SNR), $\mathbf{h}_i=\mathbf{G}_k\mathbf{w}_i$, $ \boldsymbol \theta_k$ is an $N\times 1$ vector containing the elements on the main diagonal of $ \boldsymbol \Theta_k^H$, and $\mathbf{D}_{k'}$ is a diagonal matrix with its diagonal elements obtained from  $\mathbf{h}_{k'}^H$.

\section{Designs of IRS-NOMA }
\subsection{   IRS-NOMA with Ideal Beamforming}
As shown in \eqref{SINR}, the performance of IRS-NOMA is determined by the choice of $\boldsymbol \Theta_k$. Ideal designs of IRS-NOMA typically require beamforming with infinite resolution, i.e., the hardware circuit can support arbitrary choices for the phase shift and amplitude coefficient, $\theta_{k,i}$ and $\beta_{k,i}$. Take the zero-forcing design as an example. 
In order to suppress inter-pair interference, the use of zero-forcing beamforming implies that $\boldsymbol \theta_k $ should satisfy the following constraints:
\begin{align}\label{zf}
\boldsymbol \theta_k^H \mathbf{D}_{k'} \mathbf{h}_i =0,
\end{align}
for $i\neq k$.   Denote $\mathbf{V}_k$ by an $N\times (N-K+1)$ matrix  collecting the basis vectors of the null space of $\begin{bmatrix}\mathbf{D}_{k'} \mathbf{h}_1& \cdots& \mathbf{D}_{k'} \mathbf{h}_{i-1} &\mathbf{D}_{k'} \mathbf{h}_{i+1}&\cdots&\mathbf{D}_{k'} \mathbf{h}_K\end{bmatrix}$. Therefore,    $\boldsymbol \theta_k $ can be obtained as $\mathbf{V}_k\mathbf{x}$, and the optimal $\mathbf{x}$ can be  obtained as follows:
\begin{align} 
\rm{max.} &\qquad  \frac{ | \mathbf{x}^H \mathbf{V}_k^H\mathbf{D}_{k'} \mathbf{h}_k|^2 \alpha_1^2}{|  \mathbf{x}^H \mathbf{V}_k^H \mathbf{D}_{k'} \mathbf{h}_k|^2  \alpha_2^2   +\frac{1}{\rho}}\\
\rm{s.t.} & \qquad|\mathbf{x}|^2\leq 1.
\end{align}
By using the fact that $ \frac{\alpha_1^2 y}{\alpha_2^2y+\frac{1}{\rho}}$ is a mono-increasing function of $y$ and also applying Cauchy-Schwarz inequality,  the maximum of the SINR is     achieved by $  \boldsymbol \theta_k^*=\mathbf{V}_k\frac{\mathbf{V}_k^H \mathbf{D}_{k'} \mathbf{h}_k}{|\mathbf{V}_k^H \mathbf{D}_{k'} \mathbf{h}_k|}$. Evidently, such an ideal design requires that the number of  possible choices for the phase shift and the amplitude, $\theta_{k,i}$ and $\beta_{k,i}$, is infinite. We note that the other ideal designs, e.g., directly maximizing the SINR in \eqref{SINR}, lead to the same conclusion.

\subsection{IRS-NOMA With Finite Resolution Beamforming}
 In practice, the choices for $\theta_{k,i}$ and  $\beta_{k,i}$ cannot be arbitrary due to the hardware limitations. A straightforward   design for IRS-NOMA with finite resolution beamforming is inspired by lens antenna arrays in millimeter-wave  networks \cite{7416205}. In particular, denote an $N\times N$ discrete Fourier transform (DFT) matrix by $\mathbf{F}_n$, and the optimal $\boldsymbol \theta_k$ to maximize the SINR in \eqref{SINR} can be found   by an  exhaustive search among the columns of  $\mathbf{F}_N$, which can be realized by finite-resolution phase shiters. 
 
An alternative  low-cost implementation is to apply on-off control to IRS-NOMA, i.e., each diagonal element of $\boldsymbol \Theta$ is either $0$ (off) or $1$ (on).   Without loss of generality, assume that $N=PQ$, where $P$ and $Q$ are integers. Define $\mathbf{V}= \frac{1}{\sqrt{Q}}\mathbf{I}_P\otimes \mathbf{1}_Q$, where $\mathbf{I}_P$ is a $P\times P$ identity matrix, $\mathbf{1}_Q$ is a $Q\times 1$ all-ones vector, and $\otimes$ denotes the Kronecker product. Denote $\mathbf{v}_p$ by the $p$-th column of $\mathbf{V}$, where it is easy to show that $\mathbf{v}_p^H \mathbf{v}_l=0$ for $p\neq l$, and $\mathbf{v}_p^H  \mathbf{v}_p=1$.     $\boldsymbol \theta_k$ is selected based  on the following criterion:
\begin{align}\label{on off criterion}
\underset{ \mathbf{v}_p}{\max} ~ \frac{ | \mathbf{v}_p^H \mathbf{D}_{k'} \mathbf{h}_k|^2 \alpha_1^2}{|  \mathbf{v}_p^H \mathbf{D}_{k'} \mathbf{h}_k|^2  \alpha_2^2+ \sum^{K}_{i=1,i\neq k}| \mathbf{v}_p^H \mathbf{D}_{k'} \mathbf{h}_i|^2  +\frac{1}{\rho}}.
\end{align}

As shown in the remaining of the letter, the use of on-off control not only yields better performance than the DFT-based design, but also ensures that insightful analytical results can be developed.  
\begin{lemma}\label{lemma1}
For the single user case ($K=1$), the use of IRS-NOMA with on-off control can achieve   the following outage probability   at user $k'$:
\begin{align}\label{lemma pro1}
\mathrm{P}_{k'} = &  \frac{\xi^{\frac{N}{2}}}{(\Gamma(Q))^P} \left(
 \xi^{-\frac{Q}{2}}\Gamma(Q) -2 K_Q\left(2\xi^{\frac{1}{2}}\right)
\right)^P,
\end{align}
if $\alpha_1^2-\alpha_2^2\epsilon_{k'}>0$, otherwise $\mathrm{P}_{k'} =1$,  where $\xi=\frac{Q\epsilon_{k'}}{\rho(\alpha_1^2-\alpha_2^2\epsilon_{k'})}$, $\epsilon_{k'}=2^{R_{k'}}-1$, $R_{k'}$ denotes the target rate of user $k'$, $K_n(\cdot)$ denotes the   modified Bessel function of the second kind, and $\Gamma(\cdot)$ denotes the gamma function.
At high SNR, the outage probability can be approximated as follows:
  \begin{align}\label{lemma pro2}
\mathrm{P}_{k'}  \approx &\left\{\begin{array}{ll} 
    \xi^N\left(-\ln(\xi) \right)^N,&Q=1 \\    \frac{\xi^P}{(Q-1)^P}, &Q\geq 2\end{array}\right.,
\end{align}
for $\alpha_1^2-\alpha_2^2\epsilon_{k'}>0$. 
\end{lemma}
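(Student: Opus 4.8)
The plan is to convert the outage event into a scalar threshold condition on a single, well-understood random variable, to exploit the disjoint-block structure of the on-off codebook to factorise the outage probability, and then to evaluate one block's contribution in closed form. First I would specialise \eqref{SINR} and the selection rule \eqref{on off criterion} to $K=1$, where the interference sum is empty, so that for a fixed column $\mathbf v_p$ the SINR is $\frac{\alpha_1^2 Z_p}{\alpha_2^2 Z_p+1/\rho}$ with $Z_p=|\mathbf v_p^H\mathbf D_{k'}\mathbf h_k|^2$. Since this is increasing in $Z_p$, the rule selects the largest $Z_p$, and for $\alpha_1^2-\alpha_2^2\epsilon_{k'}>0$ the event $\{\mathrm{SINR}_{k'}<\epsilon_{k'}\}$ is exactly $\{\max_p Z_p<\xi/Q\}$, equivalently $\{\max_p Y_p<\xi\}$, where $Y_p=\big|\sum_{i\in\mathcal B_p}\bar h_{k',i}\,h_{k,i}\big|^2$ and $\mathcal B_p$ is the block of $Q$ indices supporting $\mathbf v_p$. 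When $\alpha_1^2-\alpha_2^2\epsilon_{k'}\le 0$ the SINR is bounded above by $\alpha_1^2/\alpha_2^2\le\epsilon_{k'}$, so $\mathrm P_{k'}=1$, which is the second branch. Because the $P$ blocks are disjoint and the channel entries are i.i.d., the $Y_p$ are i.i.d., hence $\mathrm P_{k'}=\big[\Pr(Y_1<\xi)\big]^P$.

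Next I would identify the law of $Y_1$. Conditioned on the relevant block of $\mathbf h_{k'}$, the sum $S_1=\sum_{i\in\mathcal B_1}\bar h_{k',i}h_{k,i}$ is a linear combination of the i.i.d.\ $\mathcal{CN}(0,1)$ entries $h_{k,i}$, so $S_1$ is zero-mean complex Gaussian with variance $W=\sum_{i\in\mathcal B_1}|h_{k',i}|^2$; thus $Y_1\mid W\sim\mathrm{Exp}(W)$ while $W\sim\mathrm{Gamma}(Q,1)$. Averaging the conditional CDF gives $\Pr(Y_1<\xi)=1-\frac{1}{\Gamma(Q)}\int_0^\infty e^{-\xi/w}w^{Q-1}e^{-w}\,dw$, and the integral identity $\int_0^\infty w^{Q-1}e^{-w-\xi/w}\,dw=2\xi^{Q/2}K_Q(2\sqrt\xi)$ yields $\Pr(Y_1<\xi)=1-2\xi^{Q/2}K_Q(2\sqrt\xi)/\Gamma(Q)$. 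Raising to the power $P$ and using $N=PQ$ then reproduces \eqref{lemma pro1}.

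Finally, for \eqref{lemma pro2} I would let $\rho\to\infty$ so that $\xi\to0$ and insert the small-argument expansion of $K_Q$. The hard part will be the $Q=1$ case: there the leading algebraic term of $2\sqrt\xi\,K_1(2\sqrt\xi)$ cancels the $1$, so one must retain the logarithmic correction $K_1(z)\sim z^{-1}+\tfrac{z}{2}\ln(z/2)$, which leaves $\Pr(Y_1<\xi)\approx-\xi\ln\xi$ and hence $\mathrm P_{k'}\approx\xi^{N}(-\ln\xi)^{N}$ since $N=P$. For $Q\ge2$ the first two terms of the series for $K_Q$ give $2\xi^{Q/2}K_Q(2\sqrt\xi)=\Gamma(Q)-\Gamma(Q-1)\xi+o(\xi)$, so $\Pr(Y_1<\xi)\approx\xi/(Q-1)$ and $\mathrm P_{k'}\approx\xi^{P}/(Q-1)^{P}$; the only remaining care is the factorial bookkeeping needed to obtain the clean coefficient $1/(Q-1)$.
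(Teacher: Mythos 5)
Your proposal is correct and follows essentially the same route as the paper: reduce the outage event to a threshold on the per-block statistic $|\mathbf v_p^H\mathbf D_{k'}\mathbf h_k|^2$, use the disjointness of the $P$ blocks to write $\mathrm P_{k'}$ as the $P$-th power of a single-block CDF, and apply the small-argument expansions of $K_Q$ (with the logarithmic correction at $Q=1$) for the high-SNR limits. The only cosmetic difference is that you obtain the per-block CDF by conditioning on the Gamma-distributed weight $W$ and invoking $\int_0^\infty w^{Q-1}e^{-w-\xi/w}\,dw=2\xi^{Q/2}K_Q(2\sqrt{\xi})$, whereas the paper quotes the pdf of the squared inner product and integrates it; both give $1-2\xi^{Q/2}K_Q(2\sqrt{\xi})/\Gamma(Q)$, which coincides with \eqref{lemma pro1} after using $N=PQ$.
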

\begin{proof}
For the   case $K=1$ and $\boldsymbol \theta_k=\mathbf{v}_p$,   $\text{SINR}_{k'} $ in \eqref{SINR} can be simplified as follows:
\begin{align} 
\text{SINR}_{k',p} =&  
 \frac{ | \mathbf{v}_p^H \mathbf{D}_{k'} \mathbf{h}_k|^2 \alpha_1^2}{| \mathbf{v}_p^H \mathbf{D}_{k'} \mathbf{h}_k|^2  \alpha_2^2  +\frac{1}{\rho}}.
\end{align}
Because of the structure of $\mathbf{v}_p$, $\sqrt{Q} \mathbf{v}_p^H \mathbf{D}_{k'} \mathbf{h}_k$ is simply an inner product of two $Q\times 1$ complex Gaussian vectors.   By first treating  $\sqrt{Q} \mathbf{v}_p^H \mathbf{D}_{k'} \mathbf{h}_k$ as a complex Gaussian  random variable with zero mean and variance $| \mathbf{h}_k|^2$ and using the fact that $| \mathbf{h}_k|^2$ is chi-square distributed, the probability density function (pdf) of $\sqrt{Q} \mathbf{v}_p^H \mathbf{D}_{k'} \mathbf{h}_k$ can be obtained  as follows:
\begin{align}\label{pdf q}
f_{Q| \mathbf{v}_p^H \mathbf{D}_{k'} \mathbf{h}_k|^2 }(x) = \frac{2x^{\frac{Q-1}{2}}}{\Gamma(Q)}K_{Q-1}(2\sqrt{x}),
\end{align}
where the details for the derivation can be found in \cite{6725568,m343}. 

Therefore, for the case $\boldsymbol \theta_k=\mathbf{v}_p$, the outage probability can be expressed as follows:
\begin{align}
\mathrm{P}_{k',p} = \mathrm{P}\left(\log(1+\text{SINR}_{k',p} )<R_{k'}\right).
\end{align}
By applying the pdf shown in \eqref{pdf q}, the outage probability can be expressed as follows:
\begin{align}
\mathrm{P}_{k',p} = & \int^{\xi}_{0} f_{Q| \mathbf{v}_p^H \mathbf{D}_{k'} \mathbf{h}_k|^2 }(x) dx
\\\nonumber
= & \frac{2}{\Gamma(Q)}\int^{\xi}_{0} x^{\frac{Q-1}{2}}K_{Q-1}(2\sqrt{x}) dx
\\\nonumber
= & \frac{1}{\Gamma(Q)} \xi^{\frac{Q+1}{2}}\left(
 \xi^{-\frac{Q+1}{2}}\Gamma(Q) -2\xi^{-\frac{1}{2}}K_Q\left(2\xi^{\frac{1}{2}}\right)
\right),
\end{align}
where  the last step follows from Eq. (6.561.8) in \cite{GRADSHTEYN}.

For IRS-NOMA with on-off control, $\mathbf{V}= \frac{1}{\sqrt{Q}}\mathbf{I}_P\otimes \mathbf{1}_Q$, and hence one can easily verify that $ \mathbf{v}_p^H \mathbf{D}_{k'} \mathbf{h}_k$ and $ \mathbf{v}_l^H \mathbf{D}_{k'} \mathbf{h}_k$ are independent and identically distributed (i.i.d.)  for $p\neq l$. Therefore, the use of the selection criterion in \eqref{on off criterion} ensures that the outage probability at user $k'$ can be expressed as follows:
\begin{align}
\mathrm{P}_{k'} = &  \frac{\xi^{\frac{P(Q+1)}{2}}}{(\Gamma(Q))^P} \left(
 \xi^{-\frac{Q+1}{2}}\Gamma(Q) -2 \xi^{-\frac{1}{2}}K_Q\left(2\xi^{\frac{1}{2}}\right)
\right)^P.
\end{align}
With some algebraic manipulations,  \eqref{lemma pro1} in the lemma can be obtained. 

In order to find the high SNR approximation for \eqref{lemma pro1}, we first note that at high SNR, $\rho\rightarrow \infty$, which means that $\xi\rightarrow 0$. Recall that $K_n(z)$ can be approximated as follows: \cite{GRADSHTEYN}
\begin{align}\label{app1}
K_n(z) \approx \frac{1}{2} \left(\frac{(n-1)!}{\left(\frac{z}{2}\right)^n}-\frac{(n-2)!}{\left(\frac{z}{2}\right)^{n-2}}\right),
\end{align}
for $n\geq 2$ and  $z\rightarrow 0$.  Therefore, the outage probability $\mathrm{P}_{k'} $ can be approximated as follows: 
 \begin{align}\nonumber
\mathrm{P}_{k'} = &  \frac{1}{(\Gamma(Q))^P}  \left(
  \Gamma(Q) -2 \xi^{\frac{Q}{2}}K_Q\left(2\xi^{\frac{1}{2}}\right)
\right)^P
\\\nonumber \approx &
 \frac{1}{(\Gamma(Q))^P}  \left(
  \Gamma(Q) -  \xi^{\frac{Q}{2}}
   \left(\frac{(Q-1)!}{\xi^{\frac{Q}{2}}}-\frac{(Q-2)!}{\xi^{\frac{Q-2}{2}}}\right) 
\right)^P
\\\label{15} \approx &
   \frac{\xi^P}{(Q-1)^P} ,
\end{align}
for the cases with $Q\geq 2$. 

For the case with $Q=1$, unlike \eqref{app1}, a different  approximation for the Bessel function  will be used as shown in the following:
\begin{align}
K_1(z) \approx \frac{1}{2}  \frac{1}{\left(\frac{z}{2}\right)}+\left(\frac{z}{2}\right) \ln \left(\frac{z}{2}\right),
\end{align}
for $z\rightarrow 0$. Therefore, for the case with $Q=1$, the outage probability can be approximated as follows:
  \begin{align}\label{17}
\mathrm{P}_{k'} = &    \left(
  1 -2 \xi^{\frac{1}{2}}K_1\left(2\xi^{\frac{1}{2}}\right)
\right)^N
\\\nonumber \approx &
   \left(
 1 -  \xi^{\frac{1}{2}}
   \left(\frac{1}{\xi^{\frac{1}{2}}}+ \xi^{\frac{1}{2}}\ln(\xi) \right)
\right)^P
  \approx 
    \xi^N\left(-\ln(\xi) \right)^N.
\end{align}
By combining \eqref{15} with \eqref{17}, \eqref{lemma pro2} in the lemma can be obtained, and the proof for the lemma is complete. 
\end{proof}
{\it Remark 1:} The diversity gain for the case with $Q=1$ can be found as follows:
\begin{align}
-\underset{\rho\rightarrow \infty}{\lim }\frac{\log \mathrm{P}_{k'} }{\log \rho} = &\underset{\xi \rightarrow 0}{\lim }\frac{\log [    \xi^N\left(-\ln(\xi) \right)^N]   }{\log \xi}\\\nonumber= & N +N \underset{\xi \rightarrow 0}{\lim }\frac{\log [    \left(-\ln \xi \right)]   }{\log \xi}=N,
\end{align}
where the last step follows by applying L'Hospital's rule.
It is straightforward to show that the diversity gain for the case $Q\geq 2$ is $P$. Therefore,   the choice of $Q=1$ is diversity optimal to IRS-NOMA with on-off control. 

{\it Remark 2:} Lemma \ref{lemma1} is only applicable to IRS-NOMA with on-off control. The analytical results for IRS-NOMA with DFT are difficult to obtain, mainly because of the correlation between $  |\mathbf{v}_p^H \mathbf{D}_{k'} \mathbf{h}_k|^2$ and $ |\mathbf{v}_i^H \mathbf{D}_{k'} \mathbf{h}_k|^2$, for $i\neq p$. We note that simulation results indicate that this correlation is very weak, which results in an observation that the diversity order achieved by the DFT case is similar to that of the scheme with on-off control.

Note that for the multi-user case ($K\geq 2$), the outage probability achieved by IRS-NOMA with on-off control is   difficult to analyze,  due to the correlation between  $ | \mathbf{v}_p^H \mathbf{D}_{k'} \mathbf{h}_k|^2$ and $ | \mathbf{v}_p^H \mathbf{D}_{k'} \mathbf{h}_i|^2$, for $i\neq p$. Consistent to the single-user case, simulation results show that   $Q=1$ is also optimal in the high SNR regime for the multi-user case. Therefore, the choice of $Q=1$ is   focused in the following, where a closed-form expression for the outage probability can be obtained, as shown in the the following lemma. 
\begin{lemma}\label{lemma2}
For the multi-user case $K\geq 2$, the outage probability achieved by IRS-NOMA with on-off control ($Q=1$) is given by
\begin{align}  \label{19}
\mathrm{P}_{k'} = &   \left(1-     \frac{2 \sqrt{\frac{\epsilon_{k'}}{\rho \tau}} K_1\left(2\sqrt{\frac{\epsilon_{k'}}{\rho \tau}} \right)}{\left(1+\frac{\epsilon_{k'}}{\tau}\right)^{K-1}}\right)^N,
\end{align}
where $\tau = \alpha_1^2-\epsilon_{k'}\alpha_2^2$. At high SNR, the outage probability can be approximated as follows:
\begin{align}   \label{20}
\mathrm{P}_{k'} \approx &     \left(1-     \frac{1}{\left(1+\frac{\epsilon_{k'}}{\tau}\right)^{K-1}}\right)^N. 
\end{align}
\end{lemma}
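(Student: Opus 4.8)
The plan is to mirror the single-user argument of Lemma~\ref{lemma1}: reduce the selection-based outage to a product of $N$ independent per-beam events, then evaluate each per-beam probability by integrating out the channel magnitudes layer by layer. The first step is to specialize the on-off structure to $Q=1$, where $\mathbf{V}=\mathbf{I}_P\otimes\mathbf{1}_1=\mathbf{I}_N$, so each candidate $\mathbf{v}_p$ is the $p$-th standard basis vector and $\mathbf{v}_p^H\mathbf{D}_{k'}\mathbf{h}_i$ collapses to the single coordinate product $h_{k',p}^*h_{i,p}$, where $h_{k',p}$ and $h_{i,p}$ are the $p$-th entries of $\mathbf{h}_{k'}$ and $\mathbf{h}_i=\mathbf{G}_k\mathbf{w}_i$. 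Thus $|\mathbf{v}_p^H\mathbf{D}_{k'}\mathbf{h}_i|^2=|h_{k',p}|^2|h_{i,p}|^2$, and the per-beam SINR factors with the common magnitude $|h_{k',p}|^2$. The crucial structural observation I must establish is independence: across distinct rows $p$ everything is independent since the rows of $\mathbf{G}_k$ and the entries of $\mathbf{h}_{k'}$ are independent, while for fixed $p$ the scalars $h_{i,p}=[\mathbf{G}_k]_{p,:}\mathbf{w}_i$ are mutually independent because the $\mathbf{w}_i$ are orthonormal, making their jointly Gaussian projections uncorrelated hence independent. This lets me treat $a=|h_{k',p}|^2$, $b=|h_{k,p}|^2$ and $c_i=|h_{i,p}|^2$ as i.i.d.\ unit-mean exponentials, with interference sum $S=\sum_{i\neq k}c_i$ Gamma-distributed with shape $K-1$.

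Next I would compute the per-beam success probability. Writing out $\text{SINR}_{k',p}\geq\epsilon_{k'}$ and using $\tau=\alpha_1^2-\epsilon_{k'}\alpha_2^2>0$, success is equivalent to $b\geq \frac{\epsilon_{k'}S}{\tau}+\frac{\epsilon_{k'}}{a\rho\tau}$; conditioning on $a$ and $S$ and invoking the exponential tail of $b$ gives the conditional success probability $\exp(-\epsilon_{k'}S/\tau)\exp(-\epsilon_{k'}/(a\rho\tau))$. I would then integrate over $a$ using $\int_0^\infty e^{-\beta/x-x}\,dx=2\sqrt{\beta}\,K_1(2\sqrt{\beta})$ with $\beta=\epsilon_{k'}/(\rho\tau)$, which produces the Bessel factor, and integrate over $S$ against its Gamma density, which produces $(1+\epsilon_{k'}/\tau)^{-(K-1)}$. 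The product of these two factors is the per-beam success probability, and raising its complement to the $N$-th power by independence across $p$ yields \eqref{19}.

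For the high-SNR form \eqref{20}, I would let $\rho\to\infty$ so that the Bessel argument $2\sqrt{\beta}\to 0$; since $zK_1(z)\to1$ as $z\to0$, the factor $2\sqrt{\beta}\,K_1(2\sqrt{\beta})\to1$, and substituting this limit into \eqref{19} gives \eqref{20}. I expect the main obstacle to be the independence bookkeeping rather than the integrals: one must verify carefully that orthonormality of $\{\mathbf{w}_i\}$ decouples the per-row interference coordinates, and that this independence is preserved under the coordinate-wise beam selection so that the overall outage genuinely factors as an $N$-fold product. Once the exponential/Gamma structure is in hand, the two integrals are standard.
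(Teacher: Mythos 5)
Your proposal is correct and follows essentially the same route as the paper: specialize to $Q=1$ so each per-beam SINR reduces to $\frac{xy\alpha_1^2}{xy\alpha_2^2+xz+1/\rho}$ with $x,y$ i.i.d.\ exponential and $z$ Gamma of shape $K-1$ (independence of the $h_{i,p}$ following from orthonormality of the $\mathbf{w}_i$), evaluate the $x$-integral via $\int_0^\infty e^{-\beta/x-x}dx=2\sqrt{\beta}K_1(2\sqrt{\beta})$ and the $z$-integral via the Gamma moment-generating function to get the per-beam success probability, raise the complement to the $N$-th power by i.i.d.-ness across $p$, and use $zK_1(z)\to 1$ for the high-SNR limit. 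This is exactly the paper's derivation, differing only in that you compute the success probability directly rather than writing the outage as a double integral of $1-e^{-(\cdot)}$.
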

\begin{proof}
 For the special  case  $Q=1$, $\mathbf{v}_p$ is an $N\times 1$ vector with all of its elements being zero except its $p$-th element being one. Therefore, $\mathbf{v}_p^H \mathbf{D}_{k'} \mathbf{h}_k$ becomes $ {d}_{k',p}  {h}_{k,p}$, where  $d_{k',p}$ is the $p$-th element on the diagonal of $\mathbf{D}_{k'}$ and $h_{k,p}$ is the $p$-th element of $\mathbf{h}_k$. Therefore, with $\boldsymbol \theta_k=\mathbf{v}_p$,  $\text{SINR}_{k'} $ can be simplified as follows:
\begin{align} \nonumber
\text{SINR}_{k',p} =&   \frac{ |d_{k',p}|^2 |h_{k,p}|^2 \alpha_1^2}{|    {d}_{k',p}|^2 | {h}_{k,p}k|^2  \alpha_2^2+|{d}_{k',p}|^2 \sum^{K}_{i=1,i\neq k}    | {h}_{i,p}|^2  +\frac{1}{\rho}}.
\end{align}
  We first note that $\mathbf{h}_k=\mathbf{G}_k\mathbf{w}_k$  is still a complex Gaussian vector, since $\mathbf{w}_k$ is normalized. We further note that $\mathbf{h}_k$ and $\mathbf{h}_i$, $k\neq i$, are independent since  $\mathbf{w}_k$  and $\mathbf{w}_i$ are assumed to be orthonormal vectors.   Therefore, $h_{k,p}$ and $h_{i,p}$ are   independent and complex Gaussian distributed.  Hence, the SINR can be further simplified as follows:
\begin{align} 
\text{SINR}_{1',p} =&   \frac{ x y \alpha_1^2}{xy \alpha_2^2+ x z  +\frac{1}{\rho}},
\end{align}
where $x= |d_{k',p}|^2 $ and $y=|h_{k,p}|^2$ are two independent and exponentially distributed random variables, and $z=\sum^{K}_{i=1,i\neq k} | {h}_{i,p}|^2$ is chi-square distributed with $2(K-1)$ degrees of freedom. 
Therefore, the outage probability can be expressed as follows:
\begin{align}\nonumber 
\mathrm{P}_{k',p} = & \int^{\infty}_{0}  \int^{\infty}_{0}\left(1 -e^{-\frac{\epsilon_{k'}xz +\frac{\epsilon_{k'}}{\rho}}{x(\alpha_1^2-\epsilon_{k'}\alpha_2^2)}}\right) \frac{e^{-x}dx}{(K-2)!}   z^{K-2}e^{-z}dz\\\nonumber
= & 1- \frac{1}{(K-2)!} 2\sqrt{\frac{\epsilon_{k'}}{\rho \tau}} K_1\left(2\sqrt{\frac{\epsilon_{k'}}{\rho \tau}} \right) \\  \nonumber&\times \int^{\infty}_{0} e^{-\frac{\epsilon_{k'}}{\tau}z} z^{K-2}e^{-z}dz,
\end{align}
where    the last step follows from Eq. (3.324.1) in \cite{GRADSHTEYN}. 

By applying Eq. (3.381.4) in \cite{GRADSHTEYN}, the outage probability when $\boldsymbol \theta_k=\mathbf{v}_p$ can be obtained as follows:
\begin{align}  
\mathrm{P}_{k',p} = & 1 -  2\sqrt{\frac{\epsilon_{k'}}{\rho \tau}} K_1\left(2\sqrt{\frac{\epsilon_{k'}}{\rho \tau}} \right)    \frac{1}{\left(1+\frac{\epsilon_{k'}}{\tau}\right)^{K-1}}. 
\end{align}
Because of  the structure of $\mathbf{V}$, the SINRs for different $\mathbf{v}_p$ are i.i.d., and therefore, the  outage probability $\mathrm{P}_{k'} $ can be obtained as shown in \eqref{19} in the lemma. 

At high SNR, by using the fact that $K_1(x)\approx \frac{1}{x}$ for $x\rightarrow 0$,  the outage probability can be approximated as follows:
\begin{align}  \nonumber
\mathrm{P}_{k'} \approx &  \left(1-     \frac{2  \sqrt{\frac{\epsilon_{k'}}{\rho \tau}}  \left(2\sqrt{\frac{\epsilon_{k'}}{\rho \tau}} \right)^{-1}}{\left(1+\frac{\epsilon_{k'}}{\tau}\right)^{K-1}}\right)^N.
\end{align}
With some algebraic manipulations, the approximation shown in \eqref{20}   can be obtained, and the lemma is proved. 
 \end{proof}
 {\it Remark 3:}  The high SNR approximation shown in \eqref{20} indicates the existence of an error floor for the outage probability, i.e., the outage probability does not go to zero by simply increasing the transmission power. However, the outage probability can be reduced by increasing $N$, as shown in \eqref{20}. 

\begin{figure}[t] \vspace{-2em}
\begin{center}\subfigure[Comparison between different IRS-NOMA ($Q=1$)]{\label{fig1a}\includegraphics[width=0.24\textwidth]{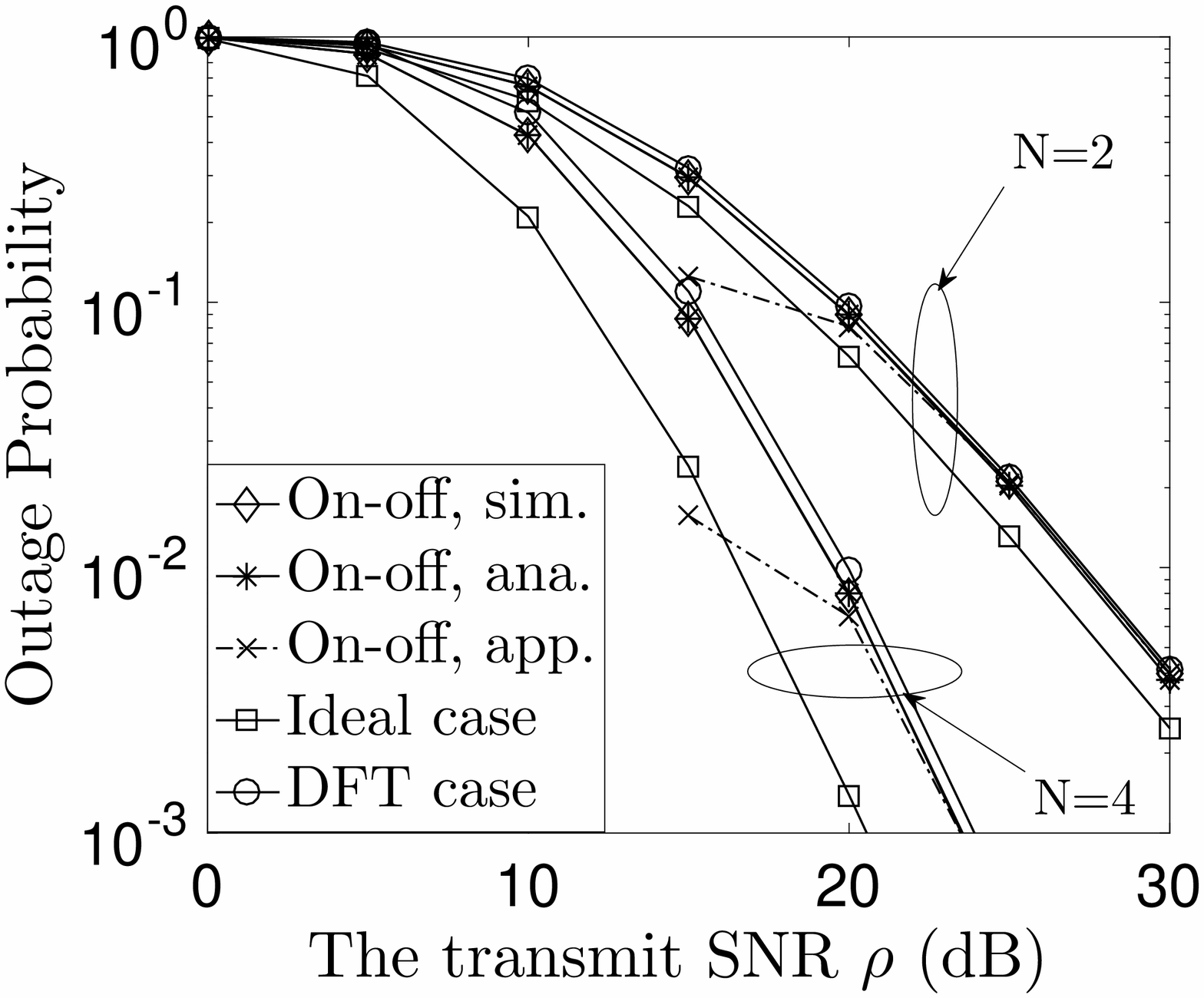}}
\subfigure[Impact of the choice of $Q$ on IRS-NOMA ($N=12$)]{\label{fig1b}\includegraphics[width=0.24\textwidth]{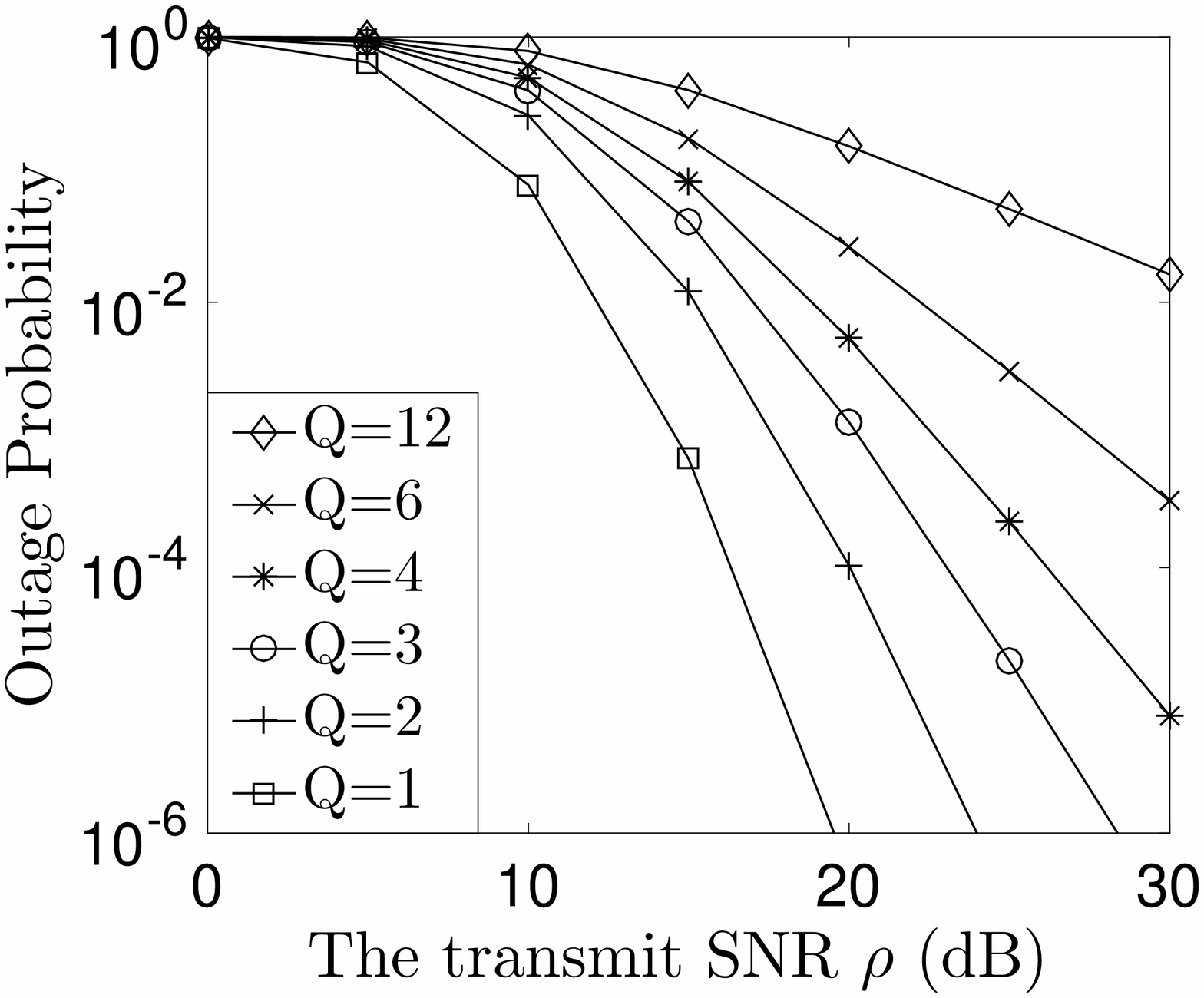}} \vspace{-0.5em}
\end{center}\vspace{-1em}
 \caption{Impact of IRS-NOMA on the downlink outage probability for the single-user case ($K=1$). $M=4$. $R_{k'}=2$ bits per channel use (BPCU). }\label{fig 1}\vspace{-1.5em}
\end{figure}

\vspace{-0.5em}
\section{Numerical Results}
In this section, computer simulation results are presented to demonstrate the performance of IRS-NOMA, where we use   $\alpha_1^2=\frac{4}{5}$ and $\alpha_2^2=\frac{1}{5}$. In Fig. \ref{fig 1}, the performance of IRS-NOMA is studied by focusing on the single-user case ($K=1$). Fig. \ref{fig1a} shows that  the slop of the outage probability curves for  the three schemes is the same, which indicates that they achieve the same diversity order. Among the three schemes, the one with ideal beamforming yields the best   performance, but  it might not be supported by  a practical antenna array. Among the two practical IRS-NOMA schemes, the one with on-off control yields better performance.  Fig. \ref{fig1a} also confirms the accuracy of the developed analytical results shown in Lemma \ref{lemma1}. Remark 1 indicates that increasing $Q$ decreases the achieved diversity gain, which is confirmed by Fig. \ref{fig1b}.

Fig. \ref{fig 2} shows the performance of the IRS-NOMA schemes when  there are multiple users ($K\geq 2$). Fig. \ref{fig2a} shows that for IRS-NOMA with ideal beamforming, the outage probability can be   reduced to zero by increasing the transmission power. However, there are   error floors for the two practical IRS-NOMA schemes. The reason for these error floors is due to the fact that the use of finite-resolution beamforming cannot  eliminate inter-pair interference completely. Fig. \ref{fig2a} also shows that the on-off scheme outperforms the DFT one, which is consistent to Fig. \ref{fig1a}.  The accuracy of the analytical results shown in Lemma \ref{lemma2} is confirmed by Fig. \ref{fig2b}.  In addition, Fig. \ref{fig2b} also shows  that increasing $N$ can effectively reduce the outage probability, as indicated in \eqref{20} in Lemma \ref{lemma2}. Fig. \ref{fig2c} confirms the optimality of the choice of $Q=1$ in the multi-user scenario, which is   consistent to Fig. \ref{fig1b}. 

\begin{figure}[t] \vspace{-2em}
\begin{center}\subfigure[Comparison between different IRS-NOMA ($N=4$ and $Q=1$)]{\label{fig2a}\includegraphics[width=0.24\textwidth]{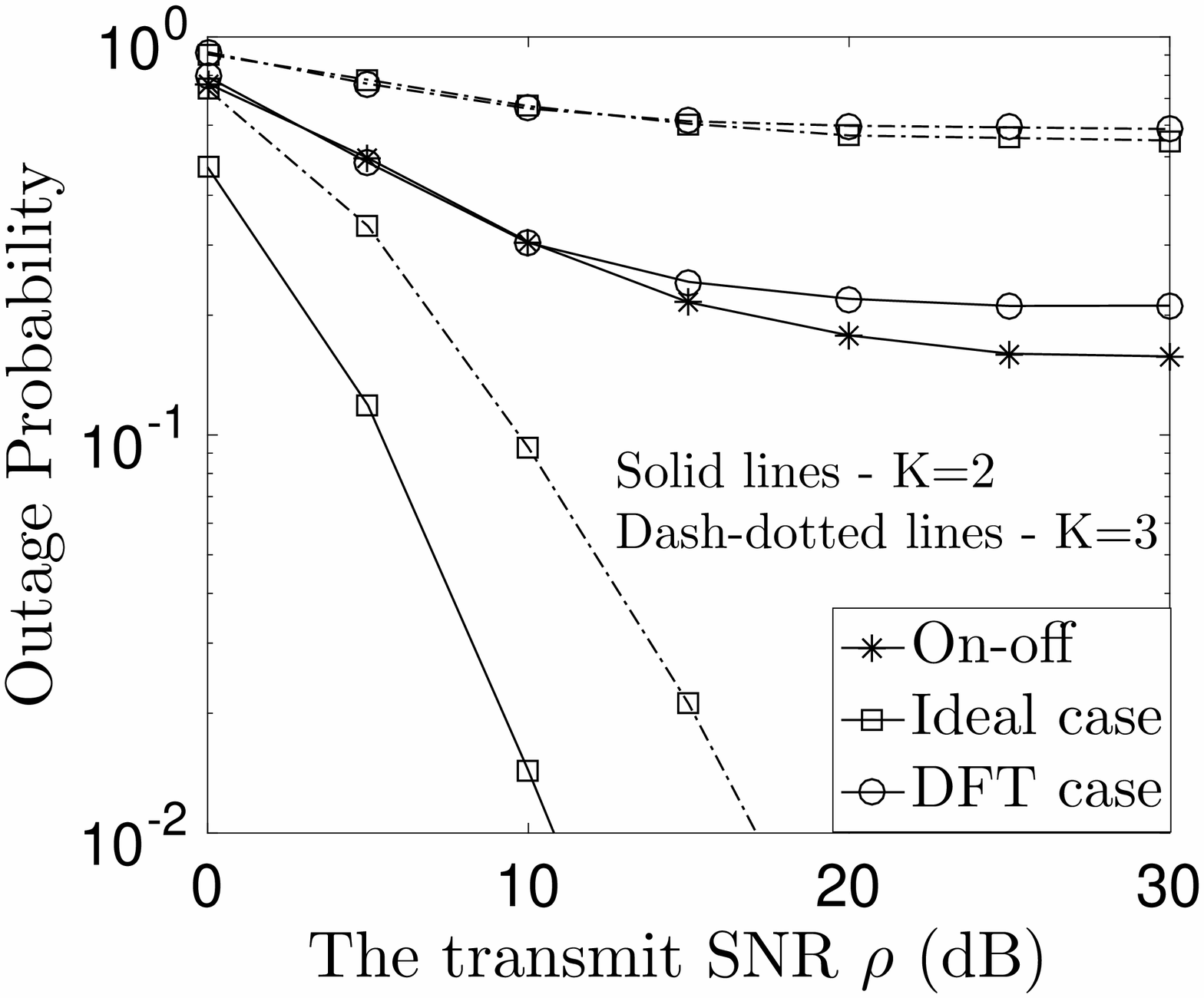}}
\subfigure[Effects of $N$ ($Q=1$) ]{\label{fig2b}\includegraphics[width=0.24\textwidth]{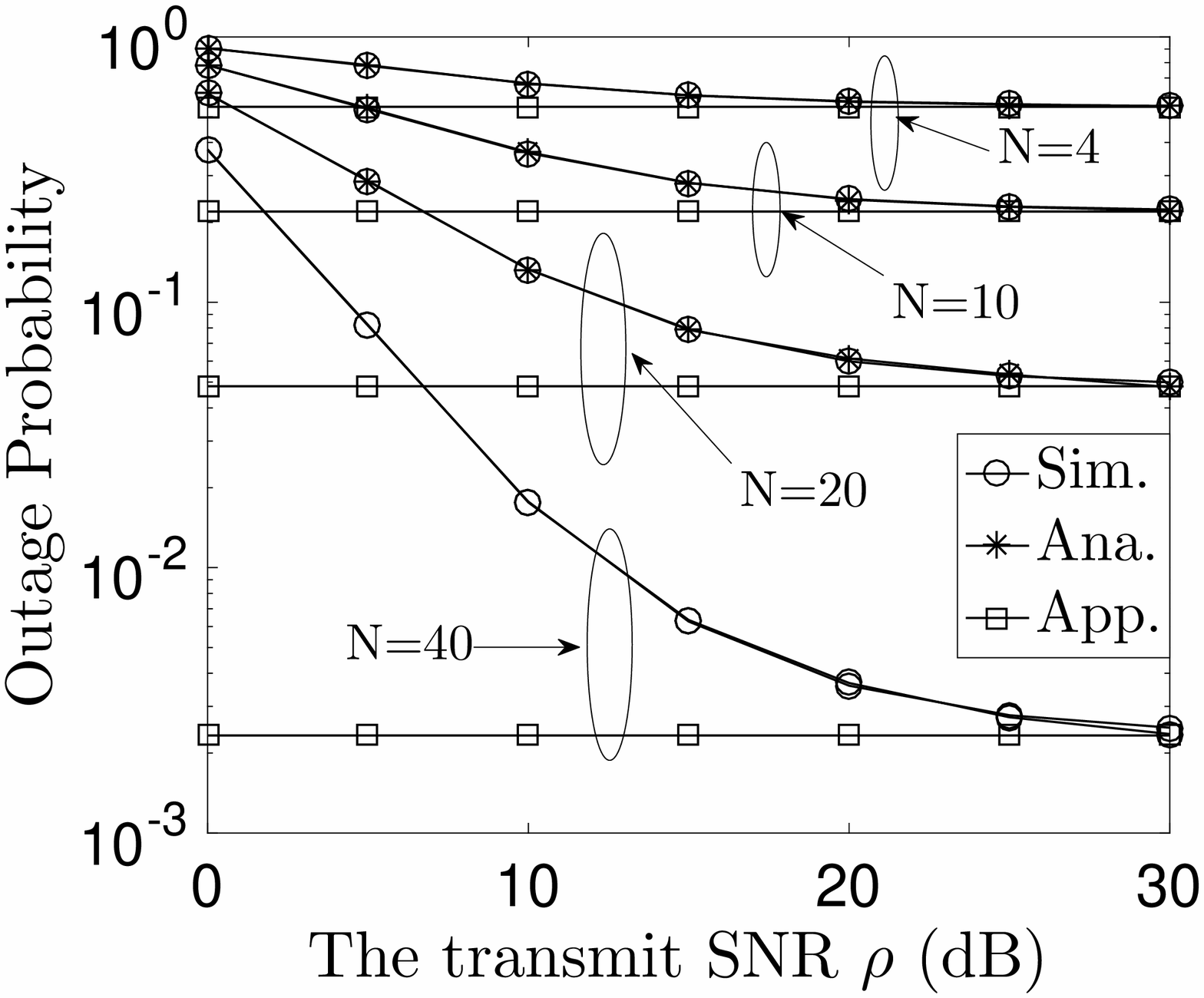}} \subfigure[Effects of $Q$ ($N=20$)]{\label{fig2c}\includegraphics[width=0.24\textwidth]{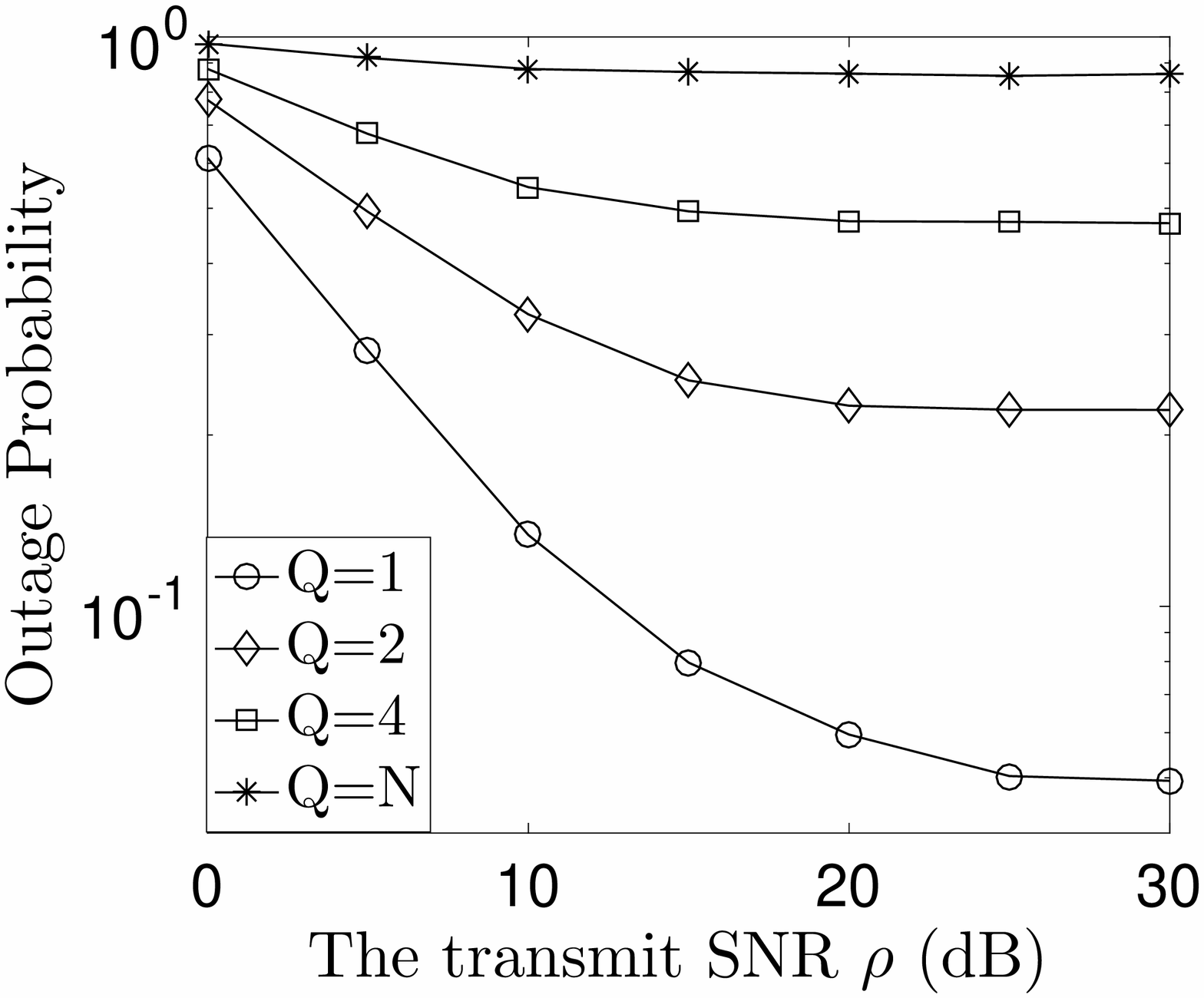}}\vspace{-0.5em}
\end{center}\vspace{-1em}
 \caption{Impact of IRS-NOMA on the downlink outage probability for the multi-user case ($K\geq 2$).  $M=4$. $R_{k'}=1$ BPCU.  }\label{fig 2}\vspace{-1.5em}
\end{figure}
\vspace{-0.5em}
\section{Conclusions}
  In this letter,   IRS-NOMA transmission has been proposed  to  ensure that more users can be served on each     orthogonal spatial direction, compared to SDMA. In addition, the impact of hardware impairments on the design of IRS-NOMA has been investigated and the performance of practical IRS-NOMA transmission has also been characterized. 

\vspace{-1em}
   \bibliographystyle{IEEEtran}
\bibliography{IEEEfull,trasfer}
   \end{document}